\def\O{{\mathcal O}}
\def\I{{\mathcal I}}
\def\R{{\tt R}}
\def\S{{\tt S}}
\def\Int{{\tt L }}
\def\Rule#1#2#3{#1 \ \frac{#2}{#3}}
\def\bigsqcap{\rotatebox[origin=c]{180}{$\bigsqcup$}} 
\def\Si{\Sigma}
\def\R{{\sf{R} }}
\def\I{{\mathcal{I}}}
\def\Nc{{\sf{N_c}}}
\def\Nr{{\sf{N_r}}}
\def\EL{{\mathcal{EL}}}
\def\dleq{\sqsubseteq}
\def\bigsqcap{\rotatebox[origin=c]{180}{$\bigsqcup$}} 
\def\dcap{\sqcap}
\def\dcup{\sqcup}
\def\ex#1{\exists #1}
\def\sig{{\tt sig}\,}
\def\lldots{,\ldots ,}
\def\QED{$\square$\medskip}
\begin{document}

\title{DeFind: A Protege Plugin for Computing Concept Definitions in $\cal{EL}$ Ontologies\thanks{The research was supported by the Russian Science Foundation, grant No. 17-11-01166. Implementation was supported by the Siberian Branch of the Russian Academy of Sciences (Block 36.1 of the Program for Basic Scientific Research II.1)}}

\author{Denis Ponomaryov\inst{1,2} \and Stepan Yakovenko\inst{2}}


\institute{Institute of Mathematics, Novosibirsk State University  \and Institute of Informatics Systems, Novosibirsk State University\newline
\email{ponom@iis.nsk.su, stiv.yakovenko@gmail.com}}


\maketitle

\begin{abstract}
We introduce an extension to the Prot\'eg\'e ontology editor, which allows for discovering concept definitions, which are not  explicitly present in axioms, but are logically implied by an ontology. The plugin supports ontologies formulated in the Description Logic $\cal{EL}$, which underpins the OWL 2 EL profile of the Web Ontology Language and despite its limited expressiveness captures most of the biomedical ontologies published on the Web. The developed tool allows to verify whether a concept can be defined using a vocabulary of interest specified by a user. In particular, it allows to decide whether some vocabulary items can be omitted in a formulation of a complex concept. The corresponding definitions are presented to the user and are provided with explanations generated by an ontology reasoner. 
\end{abstract}

\section{Introduction}\label{Sect:Introduction} \vspace{-0.25cm}
The development and application of large terminological systems pose new challenges for ontology engineering and automated reasoning tools. A question which becomes evidently important in the context of large ontologies is whether their content and logical consequences are easy to comprehend. To address this problem, a number of visualization and explanation tools has been integrated into ontology editing environments such as, e.g., visualization and explanation services implemented for the Prot\'eg\'e ontology editor. It is common that an expert exploring an ontology encounters concepts, which are formulated in a vocabulary she is not familiar with. More generally, only a certain part of an ontology vocabulary may be familiar to an expert, while the remaining part may be not. When a formulation of some concept employs vocabulary items unknown to an expert, typically she would like to know whether this concept can be reformulated in the familiar vocabulary. To give a simplified example, suppose an ontology of cuisines contains axiom $Dumplings \sqcap Entree \sqsubseteq Gnocci$ (stating that $Dumplings$ being $Entree$ are $Gnocci$) together with concept inclusions $Gnocci \sqsubseteq Dumplings$ and $Dumplings\sqsubseteq Entree$. Assume an expert is familiar with concepts $Entree$ and $Gnocci$ and she encounters the concept $Dumplings \sqcap Entree$ mentioned in the ontology. First, one may notice that this concept can be simplified, i.e., it is equivalent to $Dumplings$ wrt the ontology, due to the inclusions above. Second, the ontology entails that $Dumplings$ and $Gnocci$ are equivalent, thus, the original concept can be reformulated as $Gnocci$ (or as $Gnocci \sqcap Entree$) in the vocabulary known to the user.  

In this paper, we introduce DeFind, an extension to the Prot\'eg\'e ontology editor, which allows to find concept definitions in a user specified vocabulary consisting of concept and relations names from an input ontology. In particular, it allows to verify whether some concept or relation names can be omitted in a formulation of a (complex) concept. To the best of our knowledge, this is the first implementation of these reasoning services as a part of an ontology editing tool. Specifically, for a given ontology $\O$, vocabulary $\Sigma$, and a concept $C$ of interest, DeFind computes definitions of $C$ wrt $\O$ in $\Sigma$, i.e. concepts $\{D_1\lldots D_n\}$ (whenever they exist) such that $D_i$ contains symbols only from $\Sigma$ and $\O\models C\equiv D_i$, for all $i=1\lldots n$. DeFind supports ontologies formulated in the Description Logic $\cal{EL}$, which allows for building concepts using conjunction and existential restriction, and includes built-in concepts such as $\bot$ and $\top$. For example, it is possible to state in $\EL$ that some concepts are disjoint (e.g., $Entree \sqcap Dessert \sqsubseteq \bot$), specify subsumption relationship between concepts ($Dumplings \sqcap Entree \sqsubseteq Gnocci$) or domains of relations ($\ex{hasIngredient}.\top\sqsubseteq Food$), and use existential restriction to specify relationships of other kinds (e.g., $Salad \sqsubseteq \ex{hasDressing}.\top$, $\ex{hasIngredient}.Meat\sqsubseteq NonVegeterianFood$). The expressive features of $\EL$ although limited (e.g., negation/disjunction of concepts is not allowed), are sufficient to capture a great variety of ontologies. Most of the biomedical ontologies published on the Web fall within the formalism of $\cal{EL}$, which underpins the OWL 2 EL profile of the Web Ontology Language. 


\section{Techniques}\label{Sect:Techniques} \vspace{-0.25cm}
In the Description Logic $\cal{EL}$, concepts are built using a countably infinite alphabet of \emph{roles names} $\Nr$ and \emph{concept names} $\Nc$, with two distinguished concepts $\bot, \top\in\Nc$. The notion of \emph{concept} is defined inductively: any element of $\Nc$ is a concept and if $C, D$ are concepts and $r\in\Nr$ is a role then $C\sqcap D$ and $\ex{r}.C$ are concepts. A \emph{concept inclusion} is an expression of the form $C\sqsubseteq D$, where $C, D$ are concepts. A \emph{concept equivalence} is a expression $C\equiv D$. An \emph{ontology} is a finite set of concept inclusions and equivalences (called \emph{axioms}). 
A \emph{signature} is a subset of $\Nr\cup\Nc\setminus\{\bot, \top\}$. The signature of a concept $C$, denoted as $\sig(C)$, is the set of role and concept names (excluding $\bot$ and $\top$), which occur in $C$. The signature of a concept inclusion or ontology is defined similarly. 
Semantics is defined by using the notion of interpretation, which is a pair $\I= \langle \Delta, \cdot^\I\rangle$, where $\Delta$ is a universe and $\cdot^\I$ is a function which maps every concept name from $\Nc$ to a subset of $\Delta$ such that $(\bot)^\I=\varnothing$,  $(\top)^\I=\Delta$, and every role name to a subset of $\Delta\times \Delta$. This function is extended to arbitrary concepts by setting $(C\sqcap D)^\I=C^\I\cap D^\I$ and $(\ex{r}.C)^\I=\{x\in \Delta \mid \exists y \ \langle x,y\rangle\in r^\I \ \text{and} \ \ y\in C^\I\}$. An interpretation $\I$ is a model of a concept inclusion $C\sqsubseteq D$ (written as $\I\models C\sqsubseteq D$) if $C^\I\subseteq D^\I$. Similarly, $\I\models C\equiv D$ if $C^\I=D^\I$. An interpretation is a model of an ontology $\O$ if it is a model of every axiom of $\O$. An ontology $\O$ \emph{entails} a concept inclusion $C\dleq D$ (written as $\O\models C\dleq D$) if $\I\models\O$ yields $\I\models C\sqsubseteq D$, for any interpretation $\I$.

Let $\Sigma$ be a signature. We say that a concept $C$ is \emph{$\Sigma$-definable} wrt an ontology $\O$ if $\O\models C\equiv D$, where $\sig(D)\subseteq\Sigma$. The concept $D$ is called $\Sigma$-\emph{definition} of $C$  (wrt ontology $\O$). For example, the concept $Dumplings \sqcap Entree$ is $\Sigma$-definable wrt the ontology from the introduction, where $\Sigma=\{Gnocci\}$. It is known that in $\EL$ a concept $C$ is $\Sigma$-definable wrt an ontology $\O$ iff it holds $\O\cup\O^*\models C\dleq C^*$, where $\O^*$ and $C^*$ are ``copies'' of $\O$ and $C$, respectively, obtained by an injective renaming of all non-$\Sigma$-symbols into ``fresh'' ones, not occurring in $\O$ and $C$. Indeed, if $\O\models C\equiv D$, for some concept $D$, with $\sig(D)\subseteq\Sigma$, then it holds $\O\models C\dleq D$ and $\O^*\models D\dleq C^*$, which means $\O\cup\O^*\models C\dleq C^*$. The converse can be proved constructively by computing the corresponding concept $D$ from a proof of the inclusion $C\dleq C^*$ from $\O\cup\O^*$; we provide a justification below. Various proof systems have been proposed for the Description Logic $\cal{EL}$ in the literature. For example, it follows from the results in \cite{IncredibleELK} that the following set of inference rules is sound and complete for entailment of concept inclusions $C\dleq D$ from an ontology $\O$, where $C$ and $D$ occur in the axioms of $\O$:

\vspace{-0.8cm}

\begin{figure}\label{Fig:ELRules}
$$\Rule{\R_0}{}{C\dleq C} \hspace{0.3cm} \ \Rule{\R_\top}{}{C\dleq \top}  \hspace{0.3cm} \ \Rule{\R_\bot}{}{\bot\dleq C}  \hspace{0.5cm} \ \Rule{\R_\dleq}{C\dleq E}{C\dleq F} \ E\bowtie F\in\O, \ \bowtie \ \in\{\dleq, \equiv\}$$
$$\Rule{\R_\dcap^-}{C\dleq E_1\dcap E_2}{C\dleq E_1 \ C\dleq E_2} \hspace{0.6cm} \ \Rule{\R_\dcap^+}{C\dleq E_1 \ C\dleq E_2}{C\dleq E_1\dcap E_2} \ E_1\dcap E_2 \ \text{occurs in} \ \O $$

$$\ \Rule{\R_\exists^\bot}{C\dleq\ex{r}.E \ E\dleq\bot}{C\dleq\bot} \hspace{0.5cm} \ \Rule{\R_\exists}{C\dleq\ex{r}.E \ E\dleq F}{C\dleq \ex{r}.F}  \ \ex{r}.F \ \text{occurs in} \ \O $$
\caption{Basic inference rules for reasoning in $\cal{EL}$}
\end{figure}

\vspace{-0.4cm}

An \emph{inference} is a triple $\langle$rule name, premises, conclusion$\rangle$ of the form $\langle \R, \{\varphi_1,$ $. . ,\varphi_n\}, \psi \rangle$, where $n\geqslant 0$ and $\psi$ is obtained by an inference rule $\R$ with premises $\varphi_1,\lldots \varphi_n$. A \emph{proof} of a concept inclusion $\psi$ (from an ontology $\O$) is a set of inferences $\{ \iota_1\lldots\iota_n \}$, where $n\geqslant 1$, such that $\psi$ is the conclusion of $\iota_n$ and for all $k=1\lldots n$ and any axiom $\varphi$, if $\varphi$ is a premise of $\iota_k$, then there is a unique $j<k$ such that $\varphi$ is the conclusion of $\iota_j$ and if $\varphi$ is the conclusion of $\iota_k$ and $k\neq n$ then there is $l>k$ such that $\varphi$ is a premise of $\iota_l$.
Observe that due to the injective renaming of non $\Sigma$-symbols it holds $\sig(\O)\cap\sig(\O^*)\subseteq\Sigma$ and whenever a concept $C$ occurs in ontology $\O$, we have $\sig(C)\subseteq\sig(\O)$ and hence, $\sig(C^*)\subseteq\sig(\O^*)$. If there is a concept $D$ such that $\sig(D)\subseteq\Sigma$ and $\O\cup\O^*\models \{C\dleq D, \ D \dleq C^*\}$ (in which case $D$ is called \emph{interpolant} for $C\dleq C^*$) then due to the renaming it holds that $\O\models\{ C\dleq D, D\dleq C\}$ and thus, $\O\models C\equiv D$, i.e., $D$ is a definition of concept $C$ in signature $\Sigma$.

\begin{theorem}\label{Thm:ComputingInterpolant}
Let $\Si$ be a signature, $\O_1, \O_2$ ontologies and $C_1, C_2$ concepts such that $\sig(\O_1)\cap\sig(\O_2)\subseteq\Sigma$ and $C_i$ occurs in $\O_i$, for $i=1,2$. If $\O_1\cup\O_2\models C_1\dleq C_2$ then there exists a concept $D$ such that $\sig(D)\subseteq\Sigma$ and $\O_1\cup\O_2\models\{C_1\dleq D, \ D\dleq C_2\}$. 
\end{theorem}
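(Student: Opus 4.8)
\noindent\emph{Proof plan.} The plan is to argue proof-theoretically, reading off $D$ from a derivation (this also makes good on the ``constructive'' remark made above). Put $\O:=\O_1\cup\O_2$. Since $C_1$ occurs in $\O_1$ and $C_2$ in $\O_2$, both occur in $\O$, so by completeness of the calculus above $C_1\dleq C_2$ has a proof $\pi$ from $\O$, in which (as ensured by the rules) every concept is a subconcept of an axiom of $\O$. I would first record a \emph{single-signature} observation: any such subconcept occurs in an axiom lying in $\O_1$ or in $\O_2$, hence its signature is contained in $\sig(\O_1)$ or in $\sig(\O_2)$; and if in both, then in $\sig(\O_1)\cap\sig(\O_2)\subseteq\Si$.

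The heart of the argument is the following claim, to be proved by induction on the number of inferences: \emph{if $E\dleq F$ is provable from $\O$ with $\sig(E)\subseteq\sig(\O_1)$ and $\sig(F)\subseteq\sig(\O_2)$, then there is a concept $G$ with $\sig(G)\subseteq\sig(\O_1)\cap\sig(\O_2)$ such that $\O\models E\dleq G$ and $\O\models G\dleq F$.} If $\sig(E)\subseteq\sig(\O_1)\cap\sig(\O_2)$ one takes $G:=E$, and if $\sig(F)\subseteq\sig(\O_1)\cap\sig(\O_2)$ one takes $G:=F$ (both using $\O\models E\dleq F$, which holds by soundness); otherwise $E$ genuinely lives ``on side $1$'' and $F$ ``on side $2$'', and one inspects the last inference of the proof. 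The rules $\R_0,\R_\top,\R_\bot,\R_\exists^\bot$ cannot be this inference, since each of their conclusions has a side with empty signature or coinciding with the other side, which would place us in one of the two trivial cases above. For $\R_\dleq$ with side axiom $E'\bowtie F\in\O$: since $F$ lives on side $2$, this axiom lies in $\O_2$, so $\sig(E')\subseteq\sig(\O_2)$, and the induction hypothesis applied to the premise $E\dleq E'$ returns a $G$ that also works for $E\dleq F$ because $\O\models E'\dleq F$; the rule $\R_\dcap^-$ is analogous (the conjunction in the premise again lies in $\O_2$). For $\R_\dcap^+$ with conclusion $E\dleq F_1\dcap F_2$: the conjunction $F_1\dcap F_2$ lies in $\O_2$, so both premises $E\dleq F_i$ fall under the hypothesis, the induction hypothesis yields $G_1,G_2$, and $G:=G_1\dcap G_2$ works by monotonicity of $\dcap$.

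The delicate case, which dictates the precise form of the induction hypothesis, is $\R_\exists$, with premises $E\dleq\ex{r}.E'$ and $E'\dleq F'$, conclusion $E\dleq\ex{r}.F'$, and side condition that $\ex{r}.F'$ occurs in $\O$. Since $\ex{r}.F'$ lives on side $2$ it occurs in an axiom of $\O_2$, so $r\in\sig(\O_2)$ and $\sig(F')\subseteq\sig(\O_2)$; also $\ex{r}.E'$ occurs in $\O$, so by the single-signature observation $\sig(\ex{r}.E')\subseteq\sig(\O_1)$ or $\sig(\ex{r}.E')\subseteq\sig(\O_2)$. In the latter subcase the first premise falls under the hypothesis, so the induction hypothesis gives $G$ with $\O\models E\dleq G\dleq\ex{r}.E'$, whence $\O\models G\dleq\ex{r}.F'$ follows from $\O\models E'\dleq F'$ and monotonicity of $\ex{r}.(\cdot)$. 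In the former subcase $\ex{r}.E'$ lives entirely on side $1$, so recursing on the first premise is useless; but then $r\in\sig(\O_1)\cap\sig(\O_2)$ and $\sig(E')\subseteq\sig(\O_1)$, so the \emph{second} premise $E'\dleq F'$ falls under the hypothesis, and the induction hypothesis applied to it yields $H$ with $\sig(H)\subseteq\sig(\O_1)\cap\sig(\O_2)$, $\O\models E'\dleq H$, and $\O\models H\dleq F'$; one then sets $G:=\ex{r}.H$ and checks $\sig(G)\subseteq\sig(\O_1)\cap\sig(\O_2)$, $\O\models E\dleq\ex{r}.E'\dleq\ex{r}.H=G$, and $\O\models G\dleq\ex{r}.F'$.

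Finally I would apply the claim to the last inclusion $C_1\dleq C_2$ of $\pi$ --- legitimate because $\sig(C_1)\subseteq\sig(\O_1)$ and $\sig(C_2)\subseteq\sig(\O_2)$ --- obtaining $D:=G$ with $\sig(D)\subseteq\sig(\O_1)\cap\sig(\O_2)\subseteq\Si$ and $\O\models\{C_1\dleq D,\ D\dleq C_2\}$, as required. I expect the main obstacle to be pinning the induction hypothesis down precisely enough for the $\R_\exists$ step to close: a premise may ``live entirely on one side'' and so carry no interpolant in the shared signature, which forces the interpolation to be threaded through the other premise and re-wrapped under $\ex{r}$, and one must verify (via the single-signature observation) that no other rule produces such a mismatch. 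A minor point worth noting is that the $G$ produced need not occur in $\O$ --- it may be a fresh conjunction or existential restriction --- but this is harmless, as only semantic entailment is needed and that is closed under $\dcap$ and under $\ex{r}.(\cdot)$.
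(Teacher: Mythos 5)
Your proposal is correct and follows essentially the same route as the paper's own proof: an induction over the derivation in the given calculus, using the observation (the paper's property $\star$) that every right-hand side concept occurring in a proof lies in $\sig(\O_1)$ or $\sig(\O_2)$, with the same case analysis and the same resolution of the delicate $\R_\exists$ case by switching to the second premise and re-wrapping the interpolant under $\ex{r}$ when $r$ is shared. The only cosmetic differences are that you state the induction hypothesis purely in terms of signature containment and dispatch the degenerate cases up front, whereas the paper phrases it via occurrence of $C_i$ in $\O_i$.
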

\begin{proof}
We note the following property (referred to as $\star$), which immediately follows from the definition of the inference rules: if there is a proof of a concept inclusion $C \dleq E$ from an ontology $\O$ then either it is obtained by one of the rules $\R_\top, \R_\bot, \R^\bot_\ex$, or $E$ occurs in $C$ or $\O$. 
We use induction on the number of inferences in a proof $\langle \iota_i\lldots\iota_n\rangle$ of a concept inclusion $\varphi=C_1\dleq C_2$ from $\O_1\cup\O_2$. For $n=1$, if $\varphi$ is obtained by $\R_0$ or $\R_\top$ then $\sig(C_2)\subseteq\Sigma$ and thus, $C_2$ is an interpolant for $\varphi$. If it is obtained by $\R_\bot$, then $\bot$ is an interpolant.
For the induction step, if $\varphi$ is obtained by $\R_\ex^\bot$ then $\bot$ is an interpolant. If the rule is $\R_\sqcap^+$ then by the induction assumption there is an interpolant $D_i$ for each premise $C\dleq E_i$, $i=1,2$, and hence, $D_1\dcap D_2$ is an interpolant for $\varphi$. If the rule is $\R_\dleq$ then $E\bowtie F\in\O_1\cup\O_2$, for some $i=1,2$, where $\bowtie \ \in\{\dleq, \equiv\}$. If $i=1$ then it follows from $\sig(F)\subseteq\sig(\O_2)$ that $\sig(F)\subseteq\Sigma$ and thus, $F$ is an interpolant for $\varphi$. If $i=2$ then $\sig(E)\subseteq\sig(\O_2)$, thus by the induction assumption there is an interpolant for $C\dleq E$, which is an interpolant for $\varphi$. 
If the rule is $\R_\dcap^-$, consider its premise $C\dleq E_1\dcap E_2$; w.l.o.g we assume that $\varphi=C\dleq E_1$. It follows from ($\star$) that $\sig(E_1\dcap E_2)\subseteq\sig(\O_i)$, for some $i=1,2$. If $i=2$, then there is an interpolant for $C\dleq E_1\dcap E_2$, which is an interpolant for $\varphi$. If $i=1$ then $\sig(E_1)\subseteq\sig(\O_2)$ yields $\sig(E_1)\subseteq\Sigma$ and hence, $E_1$ is an interpolant for $\varphi$.  Finally, if the rule is $\R_\exists$ then it follows from ($\star$) that $\sig(\ex{r}.E)\subseteq\sig(\O_i)$, for some $i=1,2$. If $i=2$ then there is an interpolant for $C\dleq\ex{r}.E$, which is an interpolant for $\varphi$. If $i=1$ then it follows from the condition $\sig(\ex{r}.F)\subseteq\sig(\O_2)$ that $r\in\Sigma$ and by the induction assumption there is an interpolant $D$ for $E\dleq F$. Then $\ex{r}.D$ is an interpolant for $\varphi$. \QED
\end{proof}

\vspace{-0.2cm}

 The proof of the theorem gives an idea of a recursive algorithm for computing interpolants and hence, concept definitions, by traversing proofs of $C\dleq C^*$ from the union $\O\dcup\O^*$. In general, there can exist several proofs and each of them can yield a different definition of the concept $C$. Although the idea is simple, its implementation is not straightforward. First, it requires an ontology reasoner, which is not only able to decide entailment of concept inclusions from an ontology, but supports proof tracing, i.e., provides proofs as certificates for entailment. Second, the method presented by Theorem \ref{Thm:ComputingInterpolant} has to rely on a proof system implemented by a reasoner. DeFind employs ELK \cite{IncredibleELK}, which is a highly optimized reasoner for the Description Logic $\cal{EL}$ and its extensions. The add-ons to ELK, which provide proof tracing and explanation services, employ the set of inference rules given in Figure \ref{Fig:ELKReasonerRules}. Showing a direct analogue of Theorem \ref{Thm:ComputingInterpolant} for this proof system is not straightforward, since (in contrast to the rules from Figure 1) there exist proofs, from which an interpolant can not be directly computed. However, it can be shown that there always exists (at least a single) proof, which is appropriate for computing interpolants and hence, concept definitions.
 
 \smallskip


\begin{figure}\label{Fig:ELKReasonerRules}
$$\Rule{\S_0}{}{C\dleq C} \hspace{0.7cm}  \Rule{\S_\top}{}{C\dleq \top} \hspace{0.7cm} \Rule{\S_\bot}{}{\bot\dleq C}  \hspace{0.7cm} \Rule{\S_{ax}}{}{C\dleq E} \ C\dleq E\in\O$$

$$\Rule{\S_\dleq}{C_0\dleq C_1\ldots C_{n-1}\dleq C_n}{C_0\dleq C_n} \hspace{0.4cm} \Rule{\S_\equiv}{}{C_j\dleq C_k} \ C_i\equiv C_{i+1}\in\O, \ \substack{0\leqslant i < n}, \ \substack{ 0\leqslant j,k \leqslant n}$$

$$\Rule{\S_\dcap^-}{}{C_1\sqcap\ldots \sqcap C_n\dleq C_i} \hspace{0.6cm} \Rule{\S_\dcap^+}{C\dleq E_1 \ \ldots \  C\dleq E_n}{C\dleq E_1\sqcap\ldots\sqcap E_n} $$

$$\ \Rule{S_\exists^\bot}{}{\ex{r}.\bot\dleq\bot} \hspace{0.7cm} \Rule{\S_\exists}{C\dleq E}{\ex{r}.C\dleq \ex{r}.E} $$
\caption{Inference rules used by the proof tracing service for ELK reasoner}

\end{figure}

DeFind implements a recursive procedure, which traverses proofs obtained by using the tracing service for ELK reasoner. Proofs are built for the entailment $\O\cup\O^*\models C\dleq C^*$, where $\O^*$ and $C^*$ are ``copies'' (of a given ontology $\O$ and concept $C$) constructed wrt a specified signature $\Sigma$. The procedure applies recursively the rules below to compute a label expression $\Int(\varphi)$ for every concept inclusion $\varphi$ appearing as the conclusion of an inference and outputs a label computed for $C\dleq C^*$. Whenever there are several proofs for the same $\varphi$, each of them is traversed. A \emph{label} is a concept formulated in an extension of $\EL$ with disjunction $\dcup$ (denoted as $\EL_\dcup$) and a distinguished ``empty'' concept $\epsilon$, for which the following holds: $\epsilon\dcap D=\epsilon$, $\epsilon\dcup D=D$, $\ex{r}.\epsilon=\epsilon$, and $\ex{r}.(D\dcup E) = \ex{r}.D\dcup\ex{r}.E$, for any concepts $D,E$ and role $r$. By using the latter equation, the notion of Disjunctive Normal Form of a $\EL_\dcup$-concept is naturally defined. We say that a label expression is \emph{empty} if it equals $\epsilon$. Initially the label of every conclusion appearing in a proof is assumed to be empty. Each rule below is provided with a name and gives a label to the conclusion $\varphi$ of an inference being visited during proof traversal, depending on the type of the inference rule and the labels of its premises. We use the notations from Figure \ref{Fig:ELKReasonerRules} for the premises of each inference rule mentioned below:

\begin{description}
\item[($\Int_\bot$)] rule is $\S_\bot$ $\Rightarrow$ $\Int(\varphi) := \Int(\varphi)\dcup \bot$
\item[($\Int_\Sigma$)] rule is one of $\S_0, \S_\top, \S_{ax}, \S_\equiv, \S_\dcap^-, \S_\ex^\bot$, $\varphi=C\dleq E$, and $\sig(E)\subseteq\Sigma$ $\Rightarrow$ $\Int(\varphi) := \Int(\varphi)\dcup E$
\item[($\Int_\dleq$)] rule is $\S_\dleq$ $\Rightarrow$ $\Int(\varphi) := \Int(\varphi) \ \dcup \ \bigsqcup_{i=1\lldots n} \  \Int(C_{i-1}\dleq C_i)$
\item[($\Int_\sqcap^+$)] rule is $\S_\sqcap^+$  $\Rightarrow$ $\Int(\varphi) := \Int(\varphi) \ \dcup \  \bigsqcap_{i=1\lldots n} \ \Int(C\dleq E_i)$
\item[($\Int_\exists$)] rule is $\S_\exists$ and $r\in\Sigma$ $\Rightarrow$ $\Int(\varphi) := \Int(\varphi) \dcup \ex{r}.\Int(C\dleq E)$
\end{description}

\begin{theorem}\label{Thm:ComputingLabel}
Let $\Si$ be a signature, $\O$ ontology, and $C$ a concept occurring in $\O$. The concept $C$ is $\Sigma$-definable wrt $\O$ iff the procedure returns a non-empty $\EL_\dcup$-expression $\Int(C\dleq C^*)$. Every conjunct from the Disjunctive Normal Form of $\Int(C\dleq C^*)$ is a $\EL$-concept and it is a $\Sigma$-definition of $C$ wrt $\O$.
\end{theorem}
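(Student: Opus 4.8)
\noindent\textit{Proof idea.}
The statement is an ``iff'' together with a structural claim about the Disjunctive Normal Form, and the plan is to establish the direction ``$\Leftarrow$'' and the structural claim simultaneously by an induction over proofs, and the direction ``$\Rightarrow$'' by translating proofs in the basic system of Figure~1 into proofs accepted by the ELK tracing service and tracking what the labelling rules produce on them.

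For the soundness part I would prove the following invariant by induction on the order in which inferences are processed during a traversal: for every concept inclusion $\varphi=C'\dleq E'$ occurring as a conclusion, and every conjunct $D$ of the Disjunctive Normal Form of the current value of $\Int(\varphi)$, we have $\sig(D)\subseteq\Sigma$ and $\O\cup\O^*\models\{C'\dleq D,\ D\dleq E'\}$; that is, every DNF-conjunct of a label is an interpolant for $\varphi$. Since labels only grow via $\dcup$ and an $\dcup$ of interpolants for the same inclusion is again such, it suffices to check each single update. The rules $\S_0,\S_\top,\S_{ax},\S_\equiv,\S_\dcap^-,\S_\ex^\bot$ are handled uniformly: their conclusion $C'\dleq E'$ is sound, so $\O\cup\O^*\models C'\dleq E'$, and when $\sig(E')\subseteq\Sigma$ rule ($\Int_\Sigma$) adds the disjunct $E'$, which is an interpolant since $E'\dleq E'$ is trivial; rule ($\Int_\bot$) adds $\bot$, an interpolant for any conclusion of $\S_\bot$. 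For $\S_\dleq$ a disjunct added by ($\Int_\dleq$) is a disjunct of some $\Int(C_{j-1}\dleq C_j)$, hence by the induction hypothesis an interpolant for $C_{j-1}\dleq C_j$, and it is promoted to an interpolant for $C_0\dleq C_n$ using $\O\cup\O^*\models C_0\dleq C_{j-1}$ and $\O\cup\O^*\models C_j\dleq C_n$ (chaining the remaining premises, each established earlier in the proof). For $\S_\dcap^+$ a surviving DNF-conjunct produced by ($\Int_\sqcap^+$) has the form $D_1\dcap\ldots\dcap D_n$ with $D_i$ a conjunct of $\Int(C\dleq E_i)$ (conjuncts with an empty label vanish since $\epsilon\dcap D=\epsilon$), and it is an interpolant for $C\dleq E_1\dcap\ldots\dcap E_n$. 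For $\S_\exists$ with $r\in\Sigma$ a surviving DNF-conjunct produced by ($\Int_\exists$) has the form $\ex{r}.D'$ with $D'$ a conjunct of $\Int(C\dleq E)$ (using $\ex{r}.(D_1\dcup D_2)=\ex{r}.D_1\dcup\ex{r}.D_2$ and $\ex{r}.\epsilon=\epsilon$), and monotonicity of $\ex{r}.(\cdot)$ makes it an interpolant for $\ex{r}.C\dleq\ex{r}.E$. Applying the invariant to $\varphi=C\dleq C^*$: if $\Int(C\dleq C^*)$ is non-empty, each of its DNF-conjuncts $D$ satisfies $\sig(D)\subseteq\Sigma$ and $\O\cup\O^*\models\{C\dleq D,\ D\dleq C^*\}$, so by the observation made just before Theorem~\ref{Thm:ComputingInterpolant} we get $\O\models C\equiv D$; moreover $D$ contains neither $\dcup$ nor $\epsilon$ after normalization, hence is an $\EL$-concept. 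This gives ``$\Leftarrow$'' and the structural claim.

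For the completeness part, assume $C$ is $\Sigma$-definable wrt $\O$. By the characterization recalled in the excerpt, $\O\cup\O^*\models C\dleq C^*$, so there is a proof $P_0$ of $C\dleq C^*$ from $\O\cup\O^*$ in the basic system of Figure~1, and by the constructive proof of Theorem~\ref{Thm:ComputingInterpolant} this proof determines an interpolant $D_0$ with $\sig(D_0)\subseteq\Sigma$. I would then translate $P_0$ inference by inference into a proof $P$ over the rules of Figure~\ref{Fig:ELKReasonerRules}: $\R_0,\R_\top,\R_\bot$ map directly to $\S_0,\S_\top,\S_\bot$; an application of $\R_\dleq$ using $E\bowtie F\in\O\cup\O^*$ is simulated by deriving $E\dleq F$ with $\S_{ax}$ or $\S_\equiv$ and then applying $\S_\dleq$; $\R_\dcap^+$ becomes $\S_\dcap^+$ with $n=2$; $\R_\dcap^-$ is simulated by $\S_\dcap^-$ followed by $\S_\dleq$; $\R_\exists$ by $\S_\exists$ followed by $\S_\dleq$; and $\R_\ex^\bot$ by $\S_\exists$, $\S_\ex^\bot$ and $\S_\dleq$. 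Finally I would show, by induction along the shared structure of $P_0$ and $P$, that running the labelling procedure on $P$ reconstructs $D_0$ as one of the DNF-conjuncts of $\Int(C\dleq C^*)$: the five labelling rules are precisely the ELK-level mirror of the case analysis in the proof of Theorem~\ref{Thm:ComputingInterpolant}. For instance, the case ``$i=1$, $\sig(E_1)\subseteq\Sigma$'' for $\R_\dcap^-$ corresponds to ($\Int_\Sigma$) labelling the $\S_\dcap^-$-conclusion $E_1\dcap E_2\dleq E_1$ with $E_1$ and ($\Int_\dleq$) keeping that disjunct along the chain, while the case ``recurse on $C\dleq E_1\dcap E_2$'' corresponds to ($\Int_\dleq$) keeping the recursively computed disjunct; the cases for $\R_\exists$, $\R_\dcap^+$, $\R_\top$, $\R_\bot$, $\R_\ex^\bot$ and $\R_\dleq$ match ($\Int_\exists$), ($\Int_\sqcap^+$), ($\Int_\Sigma$), ($\Int_\bot$), ($\Int_\bot$) and ($\Int_\dleq$) respectively. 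Since proof traversal accumulates labels over all proofs, $\Int(C\dleq C^*)$ then contains the non-empty disjunct $D_0$, so the procedure returns a non-empty expression.

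I expect the completeness direction to be the main obstacle, concretely the fact already flagged before the theorem that the ELK system folds several basic steps into the ``macro'' rules $\S_\dleq$, $\S_\equiv$ and $\S_\dcap^-$, so that a generic ELK proof need not expose the intermediate $\Sigma$-concept from which an interpolant is read off. The delicate point is that ($\Int_\Sigma$) tests membership in $\Sigma$ of the \emph{right-hand side} $E$ of a sub-inclusion $C\dleq E$ rather than of $C$, which is exactly what lets it catch a $\Sigma$-conclusion such as $E_1\dcap E_2\dleq E_1$ sitting inside a $\S_\dleq$-chain whose left-hand side is not over $\Sigma$; checking that the translation above always yields such a chain whenever Theorem~\ref{Thm:ComputingInterpolant} produces an interpolant, and that accumulation over all traversed proofs only ever adds further correct disjuncts, is where the real work lies. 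A secondary technicality is the bookkeeping for $\epsilon$: one has to verify that the identities $\epsilon\dcap D=\epsilon$, $\epsilon\dcup D=D$ and $\ex{r}.\epsilon=\epsilon$ make exactly the ``incomplete'' partial labels disappear, so that the surviving DNF-conjuncts are genuine $\EL$-concepts over $\Sigma$.
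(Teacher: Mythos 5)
The paper states this theorem without any proof (only the informal remark before Figure~2 that ``it can be shown that there always exists at least a single proof which is appropriate for computing interpolants''), so there is no official argument to compare against; your proposal supplies one, and it is essentially the right one. The two-part structure is sound: the invariant that every DNF-disjunct of $\Int(\varphi)$ is a $\Sigma$-interpolant for $\varphi$ is preserved by all five labelling rules (the only non-trivial cases, $\Int_\dleq$ and $\Int_\sqcap^+$, work exactly as you say, using soundness of the remaining premises of $\S_\dleq$ to pad the chain), and this gives the ``$\Leftarrow$'' direction plus the structural claim via the remark preceding Theorem~\ref{Thm:ComputingInterpolant}. For ``$\Rightarrow$'', translating a Figure~1 proof into the calculus of Figure~\ref{Fig:ELKReasonerRules} and checking that the labelling rules reproduce the interpolant of Theorem~\ref{Thm:ComputingInterpolant} is precisely the ``there exists at least one appropriate proof'' argument the paper alludes to; your case table is correct, including the key observation that $\R_\dcap^-$ and $\R_\exists$ are simulated by an axiom-free inference ($\S_\dcap^-$, resp.\ $\S_\exists$) followed by $\S_\dleq$, which is what lets ($\Int_\Sigma$) or ($\Int_\exists$) fire on the auxiliary conclusion.

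Two caveats. First, a genuine (if quiet) assumption in your completeness direction: the procedure traverses the proofs \emph{returned by the tracing service}, so exhibiting an appropriate proof in the calculus of Figure~\ref{Fig:ELKReasonerRules} only yields a non-empty $\Int(C\dleq C^*)$ if that proof (or some other label-yielding one) is actually among those enumerated. The theorem is true only relative to a completeness assumption on the proof enumeration; you should state it explicitly, since the paper's own warning that ``there exist proofs from which an interpolant cannot be directly computed'' shows that traversing \emph{some} proof is not enough. Second, a small bookkeeping slip: in your simulation of $\R_\ex^\bot$ the disjunct $\bot$ is contributed by ($\Int_\Sigma$) applied to the $\S_\ex^\bot$-conclusion $\ex{r}.\bot\dleq\bot$ (since $\sig(\bot)\subseteq\Sigma$), not by ($\Int_\bot$), which fires only for $\S_\bot$; the label is the same, but the attribution matters if one is checking that every case of Theorem~\ref{Thm:ComputingInterpolant} is covered by some labelling rule.
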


\section{Complexity of Definitions}\label{Sect:Complexity} \vspace{-0.25cm}

We now estimate the size and the number of definitions computed by DeFind. The above mentioned procedure recursively computes labels for conclusions of inferences appearing in proofs. Under the assumption that every proof is computed in polynomial time in the size of an input ontology $\O$ (which is true when using the ELK reasoner), the size of every proof is polynomially bounded by the size of $\O$. For any conclusion, the label is computed once, when traversing a single proof. The rules ($\Int_\bot$), ($\Int_\Sigma$) give a label concept from a conclusion of some inference and hence, its size is bounded by the size of $\O$. Every other rule of the procedure gives a label, which is obtained as a combination (under $\dcup, \ \dcap$, or $\ex{r}.$, with $r\in\sig(\O)$) of $k$ expressions computed at the previous steps of the recursion, where $k\geqslant 1$ is the number of premises of an inference rule and hence, $k$ is bounded by the size of the proof. In particular, the rules $\Int_\dleq$, $\Int_\dcap^+$ give a label, which is at most $k$ times longer than the labels of the premises of $\S_\dleq$, $\S_\dcap^+$. Therefore, the size of any label computed from a single proof is at most exponential in the size of the ontology. Since every proof is polynomially bounded, there are at most exponentially many proofs for any conclusion and hence, the output of the procedure has at most exponential size. By Theorem \ref{Thm:ComputingLabel}, every conjunct from the DNF of the output label is a definition and therefore, the upper bound on the number of definitions computed by DeFind is double exponential in the size of the ontology. 
To estimate the size of the computed definitions one can w.l.o.g. assume that every rule of the procedure gives a label in DNF and show the following by induction. Every conjunct from a label obtained by some rule of the procedure is either a concept from a conclusion of an inference from a proof, or a combination (under $\dcup, \ \dcap$, or $\ex{r}.$) of conjuncts from the labels computed at the previous steps. Therefore, the size of every definition is at most exponential in the size of the input ontology. 
We now show that these bound are tight by giving an example of an ontology $\O$, concept $C$, and a signature $\Sigma$ such that there is a double exponential number of shortest $\Sigma$-definitions of $C$ wrt $\O$, where every definition has size exponential in the size of $\O$. Let $\Sigma=\{r, s, D_1, D_2\}$, $C=A_0$, and for $n\geqslant 1$, let $\O$ consist of  axioms: $A_0\equiv \ex{r}.A_1\dcap\ex{s}.A_1 \lldots$  $ A_{n-1}\equiv \ex{r}.A_n\dcap\ex{s}.A_n$, $\ A_n\equiv D_1$, $\ A_n\equiv D_2$. Observe that $\O\models A_n\equiv D_1\dcup D_2$, hence, $\O\models A_{n-1}\equiv \ex{r}.(D_1\dcup D_2) \dcap \ex{s}.(D_1\dcup D_2)$. Converting this expression into DNF gives four $\Sigma$-definitions of $A_{n-1}$ wrt $\O$, i.e., concepts $\ex{r}.D_i \dcap \ex{s}.D_j$, where $i,j=1,2$. Similarly, $A_0$ is equivalent wrt $\O$ to a $\EL_\dcup$-concept, which has $2^n$ occurrences of $D_1\dcup D_2$. Converting this concept into DNF gives a double exponential number of $\Sigma$-definitions of $A_0$ wrt $\O$ and it can be shown that no shorter $\Sigma$-definitions exist. \vspace{-0.2cm}

\section{Features of DeFind}\label{Sect:Features} \vspace{-0.2cm}

DeFind (available at https://github.com/stiv-yakovenko/defind) requires the ELK reasoner \cite{IncredibleELK} and proof explanation plugin \cite{ExplanationELK} for Prot\'eg\'e. 
The interface of DeFind is given in Figure \ref{Fig:Interface}. The screenshot shows $\Sigma$-definitions of concept $\ex{r}.A_1\dcap\ex{s}.A_1$ computed by Defind wrt ontology $\O$ and signature $\Sigma$ from the previous section for $n=1$. The user can input a complex concept in the Manchester OWL Syntax into the field ``Class expression'' or drag-and-drop a concept name from the class hierarchy of the ontology. To specify concept and role names for the target signature, one can drag-and-drop items from the class or object property hierarchy. One can simultaneously add all role names from the ontology into the target signature by pressing ``Add all Object Properties'' button. Selecting ``doesn't include symbols'' option sets the target signature to consist of all those concept an role names from the ontology, which are not listed in the ``Target signature'' field.  On pressing ``Compute definitions'', the corresponding $\Sigma$-definitions are computed (or a notification is shown if no definition exists). On pressing the question mark at the right-hand side of a definition, an explanation is shown, why it corresponds to the specified class expression: the Prot\'eg\'e proof explanation plugin is called to visualize a proof (see, e.g., Figure 5 in \cite{ExplanationELK}) of the equivalence of the class expression to the concept computed by DeFind. 

\vspace{-0.25cm}

\begin{figure}\label{Fig:Interface}
\makebox{\put(0,0){\includegraphics[height=5cm]{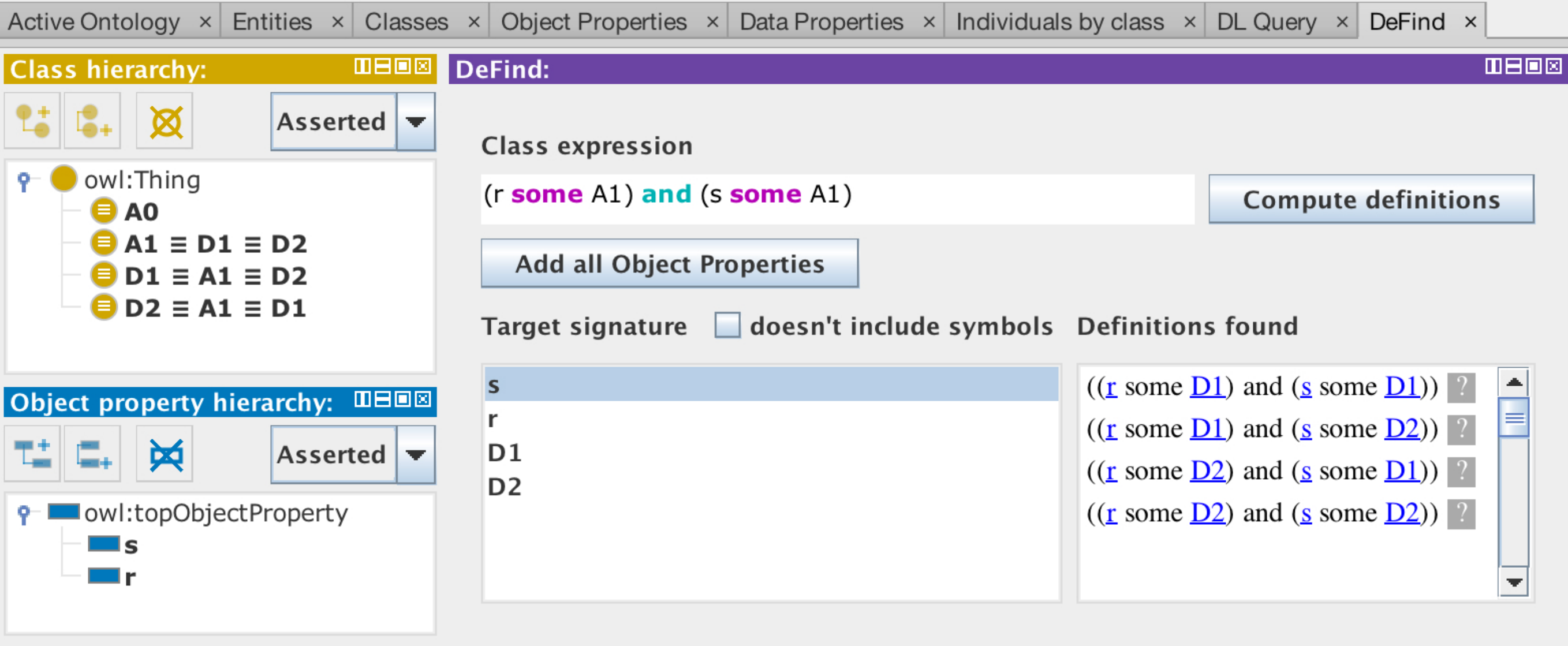}}}
\caption{Interface of DeFind in the Prot\'eg\'e ontology editor}
\end{figure}

\section{Related Work} \vspace{-0.25cm}

In \cite{BethDefinabilityExpressive}, a similar approach is implemented to compute concept definitions from tableaux proofs in the Description Logic $\cal{ALC}$ and its extensions. The authors provide a constructive proof of an analogue of Theorem \ref{Thm:ComputingInterpolant} and derive an algorithm, which computes a definition of at most double exponential size, whenever there exists one in a target signature. This bound is tight and is one exponential larger than the size of the shortest definitions in $\EL$. It is known that general tableaux methods, when applied to $\EL$, provide exponentially longer proofs than the ones obtained in the proof systems from this paper. It should be noted that the algorithm from \cite{BethDefinabilityExpressive} computes a single definition from a proof, whereas DeFind computes several definitions, the number of which is bounded by a double exponential in the size of the ontology. DeFind explores all proofs provided by a reasoner and traversing a single proof can potentially give several definitions for a concept of interest.  In \cite{EnrichedModels}, the authors propose an algorithm, which finds a single concept definition in a signature of interest wrt a normalised $\cal{EL}$-ontology. 
The algorithm computes an algebraic representation of an ontology as a canonical model provided with an information on how each element of the model is obtained. Essentially, it implements the two features, which are required for computing concept definitions, i.e., ontology reasoning and proof tracing. In contrast, DeFind relies on external reasoning and proof tracing services. The question whether it  suffices to compute a single concept definition strongly depends on application. For instance, the method from \cite{BethDefinabilityExpressive} is employed in \cite{QueryReformulation} to compute query reformulations. In \cite{DecompPaper}, checking for existence of a concept definition (or computing a single one) is used for ontology decomposition. In both applications, the quality of the obtained result is strongly related to the form of a computed definition. On the other hand, it is argued in \cite{MDS} that having multiple definitions is helpful in the context of ontology alignment, since having a choice of several (semantically equivalent, but syntactically different) definitions facilitates finding matches between concepts from different ontologies. The authors propose a heuristic approach for computing concept definitions based on examination of concept definition patterns, which typically occur in ontologies. This solution does not employ ontology reasoning and is in general faster, but it can miss some definitions computed by the methods that employ proof tracing.  

\section{Outlook}  \vspace{-0.2cm}

The exponential lower bound for the size and number of shortest concept definitions is shown in this paper by an artificial ontology example. However it demonstrates the natural phenomenon: some concepts can be used as auxiliary ones in ontology to make definitions shorter. If one asks whether there is a definition, which does not contain certain concept or role names, then one may get a positive answer, but the obtained definition may be long and not easy to comprehend. It is to be understood whether a significant increase of definition size can happen in real-world ontologies. Further, it may be important to develop techniques that help to automatically suggest extensions of the target signature, when a blow-up is unavoidable. One of the primary goals of further implementation is to support other language features of OWL 2 EL profile of the Web Ontology Language, e.g., inclusion and composition of roles. These features are particularly interesting, because on one hand they are frequently used in ontologies and on the other hand, Theorem \ref{Thm:ComputingInterpolant} is no longer true for $\cal{EL}$ extended with these features. Finally, there is a room for optimizations for DeFind. If the plugin is used extensively to compute definitions of the same concept wrt different signatures then it makes sense to implement optimizations proposed in \cite{EnrichedModels}, which employ incremental reasoning to reduce computation overhead, when only a small part of the input is changed. 

\bibliographystyle{splncs.bst}
\bibliography{references.bib}

\end{document}